\newcommand{\m}[1]{\boldsymbol{#1}} % matrix
\DeclareMathOperator*{\argmin}{arg\,min} % argmin
\newcommand{\mat}[1]{\begin{bmatrix*}[r]#1 % matrix, alignment right
\end{bmatrix*}}
\newcommand{\matc}[1]{\begin{bmatrix*}[c]#1 % matrix, alignment center
\end{bmatrix*}}
\DeclareMathOperator{\rank}{rank} % rank
\newcommand{\norm}[1]{\left\lVert #1 \right\rVert} % norm
\newcommand{\abs}[1]{\left\lvert #1 \right\rvert} % absolute value
\newcommand\tdd[1]{} % to do done
\newcommand\commd[1]{} % comment done
\newcommand\fsamd[1]{} % to do done
\newcommand\fmichd[1]{} % to do done
\algnewcommand{\Initialize}[1]{%
	\State \textbf{initialize} {\raggedright #1}
}
\algnewcommand{\Statey}[1]{\Statex \hskip\ALG@thistlm #1}
\begin{document}

%\title{This is the sample article title\protect\thanks{Preprint submitted to Int J Adapt Control Signal Process.}}
\title{Adaptive Dynamic Programming for Model-free Tracking of Trajectories with Time-varying Parameters\protect\thanks{This is a preprint submitted to the Int J Adapt Control Signal Process. The substantially revised version will be published in the Int J Adapt Control Signal Process (DOI: 10.1002/ACS.3106).}}

\author{Florian K\"opf}

\author{Simon Ramsteiner}

\author{Michael Flad}

\author{S\"oren Hohmann}

\authormark{K\"opf \textsc{et al.}}

\address{\orgdiv{Institute of Control Systems}, \orgname{Karlsruhe Institute of Technology (KIT)}, \orgaddress{\state{Karlsruhe}, \country{Germany}}}

%\address[2]{\orgdiv{Org Division}, \orgname{Org Name}, \orgaddress{\state{State name}, \country{Country name}}}
%
%\address[3]{\orgdiv{Org Division}, \orgname{Org Name}, \orgaddress{\state{State name}, \country{Country name}}}

\corres{Florian K\"opf, Institute of Control Systems, Karlsruhe Institute of Technology (KIT), Kaiserstr.~12, 76131 Karlsruhe, Germany. \email{florian.koepf@kit.edu}}

%\presentaddress{This is sample for present address text this is sample for present address text}

\abstract[Summary]{In order to autonomously learn to control unknown systems optimally w.r.t. an objective function, Adaptive Dynamic Programming (ADP) is well-suited to adapt controllers based on experience from interaction with the system. In recent years, many researchers focused on the tracking case, where the aim is to follow a desired trajectory. So far, ADP tracking controllers assume that the reference trajectory follows time-invariant exo-system dynamics---an assumption that does not hold for many applications. In order to overcome this limitation, we propose a new Q-function which explicitly incorporates a parametrized approximation of the reference trajectory. This allows to learn to track a general class of trajectories by means of ADP. Once our Q-function has been learned, the associated controller copes with time-varying reference trajectories without need of further training and independent of exo-system dynamics. After proposing our general model-free off-policy tracking method, we provide analysis of the important special case of linear quadratic tracking. We conclude our paper with an example which demonstrates that our new method successfully learns the optimal tracking controller and outperforms existing approaches in terms of tracking error and cost.}

\keywords{Adaptive Dynamic Programming, Optimal Tracking, Optimal Control, Reinforcement Learning}

\jnlcitation{\cname{%
\author{K\"opf, F.}, 
\author{S. Ramsteiner}, 
\author{M. Flad}, and 
\author{S. Hohmann}} (\cyear{2019}), 
%\ctitle{Adaptive Dynamic Programming for Model-free Tracking of Trajectories with Time-varying Parameters}, \cjournal{Int J Adapt Control Signal Process}, \cvol{2019;00:1--6}.}
\ctitle{Adaptive Dynamic Programming for Model-free Tracking of Trajectories with Time-varying Parameters}, \cjournal{Int J Adapt Control Signal Process}, \cvol{2020}.}

\maketitle

%\footnotetext{\textbf{Abbreviations:} ANA, anti-nuclear antibodies; APC, antigen-presenting cells; IRF, interferon regulatory factor}

\section{Introduction}\label{sec:introduction}
Adaptive Dynamic Programming (ADP) which is based on Reinforcement Learning has gained extensive attention as a model-free adaptive optimal control method.\cite{Lewis.2009} In ADP, pursuing the objective to minimize a cost functional, the controller adapts its behavior on the basis of interaction with an unknown system.
The present work focuses on the ADP tracking case, where a reference trajectory is intended to be followed while the system dynamics is unknown. As the long-term cost, i.e. value, of a state changes depending on the reference trajectory, a controller that has learned to solve a regulation problem cannot be directly transferred to the tracking case.

Therefore, in literature, there are several ADP tracking approaches in discrete time \cite{Luo.2016,Kiumarsi.2014, Kopf.2018, Dierks.2009} and continuous time.\cite{Modares.2014, Zhang.2017} All of these methods assume that the reference trajectory $\m{r}_k$ can be modeled by means of a time-invariant exo-system $\m{r}_{k+1}=\m{f}_{\text{ref}}(\m{r}_k)$ (and $\dot{\m{r}}(t)=\m{f}_{\text{ref}}(\m{r}(t))$, respectively). Then, an approximated value function (or Q-function) is learned in order to rate different states (or state-action combinations) w.r.t their expected long-term cost. Based on this information, approximated optimal control laws are derived.
Whenever this reference trajectory and thus the function $\m{f}_{\text{ref}}$ changes, the learned value function and consequently the controller is not valid anymore and needs to be re-trained.
Therefore, the exo-system tracking case with time-invariant reference dynamics $\m{f}_{\text{ref}}$ is not suited for all applications.\cite{vanNieuwstadt.1997} For example in autonomous driving, process engineering and human-machine collaboration, it is often required to track flexible and time-varying trajectories. In order to account for various references, the multiple-model approach presented by Kiumarsi et al.\cite{Kiumarsi.2015} uses a self-organizing map that switches between several learned models. However, in their approach, new sub-models need to be trained for each exo-system $\m{f}_{\text{ref}}$.

Thus, our idea is to define a state-action-\textit{reference} Q-function that explicitly incorporates the course of the reference trajectory in contrast to the commonly used Q-function (see e.g. Sutton and Barto\cite{Sutton.2018}) which only depends on the current state $\m{x}_k$ and control $\m{u}_k$.
This general idea has first been proposed in our previous work,\cite{Kopf.2019} where the reference $\m{r}_k$ is given on a finite horizon and assumed to be zero thereafter. Thus, the number of weights to be learned depends on the horizon on which the reference trajectory is considered. As the reference trajectory is given for each time step, this allows high flexibility, but the sampling time and (unknown) system dynamics significantly influence the reasonable horizon length and thus the number of weights to be learned. Based on these challenges, our major idea and contribution in the present work is to approximate the reference trajectory by means of a potentially time-varying parameter set $\m{P}_{k}$ in order to compress the information about the reference compared to our previous work\cite{Kopf.2019} and incorporate this parameter into a new Q-function. In doing so, the Q-function explicitly represents the dependency of the expected long-term cost on the desired reference trajectory. Hence, the associated optimal controller is able to cope with time-varying parametrized references. We term this method \textit{Parametrized Reference ADP (PRADP)}.

Our main contributions include:
\begin{itemize}
	\item The introduction of a new reference-dependent Q-function that explicitly depends on the reference-parameter $\m{P}_k$.
	\item Function approximation of this Q-function in order to realize Temporal Difference (TD) learning (cf.~Sutton\cite{Sutton.1988}).
	\item Rigorous analysis of the form of this Q-function and its associated optimal control law in the special case of linear-quadratic (LQ) tracking.
	\item A comparison of our proposed method with algorithms assuming a time-invariant exo-system $\m{f}_{\text{ref}}$ and the ground truth optimal tracking controller.
\end{itemize}\tdd{evtl. noch erwähnen, dass Vergleich mit optimalem Regler mit vollständigem Wissen erfolgt}
In the next section, the general problem definition is given. Then, PRADP is proposed in Section~\ref{sec:our_method}. Simulation results and a discussion are given in Section~\ref{sec:results} before the paper is concluded.

\section{General Problem Definition}\label{sec:problem_definition}
Consider a discrete-time controllable system
\begin{align}\label{eq:system_general}
	\m{x}_{k+1}=\m{f}\left(\m{x}_k,\m{u}_k\right),
\end{align}
where $k\in \mathbb{N}_0$ is the discrete time step, $\m{x}_k\in\mathbb{R}^n$ the system state and $\m{u}_k\in\mathbb{R}^m$ the input. The system dynamics $\m{f}(\cdot)$ is assumed to be \textit{unknown}.

Furthermore, let the parametrized reference trajectory $\m{r}(\m{P}_k,i)\in\mathbb{R}^n$ which we intend to follow be described by 
\begin{align}\label{eq:reference}
	\m{r}(\m{P}_{k},i)=\m{P}_{k} \m{\rho}(i)=\matc{\m{p}_{k,1}^\intercal\\\m{p}_{k,2}^\intercal\\ \vdots \\ \m{p}_{k,n}^\intercal}\m{\rho}(i).
\end{align}
%\begin{align}\label{eq:reference}
%\m{r}(\m{P}_{k},i)=\m{P}_{k} \m{\rho}(i)=\matc{\m{p}_{k,1}&\m{p}_{k,2}& \cdots & \m{p}_{k,n}}^\intercal\m{\rho}(i).
%\end{align}
%\begin{align}\label{eq:reference}
%\m{r}(\m{P}_{k},i)=\m{P}_{k} \m{\rho}(i)=\matc{\m{p}_{k,1}& \cdots & \m{p}_{k,n}}^\intercal\m{\rho}(i).
%\end{align}
At any time step $k$, the reference trajectory is described by means of a parameter matrix $\m{P}_{k}\in\mathbb{R}^{n\times p}$ and given basis functions $\m{\rho}(i)\in\mathbb{R}^{p}$. Here, $i\in\mathbb{N}_0$ denotes the time step on the reference from the local perspective at time $k$, i.e. for $i=0$, the reference at time step $k$ results and $i>0$ yields a prediction of the reference for future time steps. Thus, in contrast to methods which assume that the reference follows time-invariant exo-system dynamics $\m{f}_{\text{ref}}$, the parameters $\m{P}_k$ in \eqref{eq:reference} can be time-varying, allowing much more diverse reference trajectories.

Our aim is to learn a controller which does not know the system dynamics and minimizes the cost
\begin{align}\label{eq:Jk}
	J_k = \sum_{i=0}^{\infty}\gamma^{i}c(\m{x}_{k+i},\m{u}_{k+i}, \m{r}(\m{P}_k,i)),
\end{align}
where $\gamma\in [0,1)$ is a discount factor and $c(\cdot)$ denotes a non-negative one-step cost.
We define our general problem as follows.

\begin{problem}\label{problem:p1}
	For a given parametrization of the reference by means of $\m{P}_{k}$ according to \eqref{eq:reference}, an optimal control sequence that minimizes the cost~\eqref{eq:Jk} is denoted by $\m{u}_k^*,\m{u}_{k+1}^*,\dots$ and the associated cost by $J_k^*$. The system dynamics is unknown. At each time step $k$, find $\m{u}_k^*$.
\end{problem}

\section{Parametrized Reference ADP (PRADP)}\label{sec:our_method}
In order to solve Problem~\ref{problem:p1}, we first propose a new, modified Q-function whose minimizing control represents a solution $\m{u}_k^*$ to Problem~\ref{problem:p1}. In the next step, we parametrize this Q-function by means of linear function approximation. Then, we apply Least-Squares Policy Iteration (LSPI) (cf. Lagoudakis and Parr\cite{Lagoudakis.2003}) in order to learn the unknown Q-function weights from data without requiring a system model. Finally, we discuss the structure of this new Q-function for the linear-quadratic tracking problem, where analytical insights are possible.

\subsection{Proposed Q-Function}
The relative position $i$ on the current reference trajectory that is parametrized by means of $\m{P}_{k}$ according to \eqref{eq:reference} needs to be considered when minimizing the cost $J_k$ as given in \eqref{eq:Jk}. In order to do so, one could explicitly incorporate the relative time $i$ into the Q-function that is used for ADP. This would yield a Q-function of the form $Q(\m{x}_k,\m{u}_k,\m{P}_{k},i)$. However, this would unnecessarily increase the complexity of the Q-function and hence the challenge to approximate and learn such a Q-function. Thus, we decided to implicitly incorporate the relative time $i$ on the current reference trajectory parametrized by $\m{P}_{k}$ into the reference trajectory parametrization. This yields a shifted parameter matrix $\m{P}_{k}^{(i)}$ according to the following definition.
\begin{definition}{(Shifted Parameter Matrix $\m{P}_k^{(i)}$)}\label{def:shifted_P}
	Let the matrix $\m{P}_k^{(i)}$ be defined such that
	\begin{subequations}\label{eq:def_condition_trans}
		\begin{align}
			\m{r}\left(\m{P}_k^{(i)},j\right)&=\m{r}(\m{P}_k,i+j)\\
			\Leftrightarrow \m{P}_k^{(i)}\m{\rho}(j)&=\m{P}_k\m{\rho}(i+j).\label{eq:subeqb}
		\end{align}
	\end{subequations}
	Thus, 
	\begin{align}\label{eq:P_lin}
		\m{P}_k^{(i)}=\m{P}_k\m{T}(i)
	\end{align}
	is a modified version of $\m{P}_k=\m{P}_k^{(0)}$ such that the associated reference trajectory is shifted by $i$ time steps, where $\m{T}(i)$ is a suitable matrix. Note that $\m{T}(i)$ is in general ambiguous as in the general case $p>1$ the system of equations \eqref{eq:subeqb} in order to solve for $\m{P}_k^{(i)}$ is underdetermined. Thus, $\m{T}(i)$ can be any matrix such that \eqref{eq:def_condition_trans} holds.
\end{definition}
%We will give the transformation to obtain the shifted parameter matrix $\m{P}_k^{(i)}$ for a cubic reference parametrization later. 

Our proposed Q-function which explicitly incorporates the reference trajectory by means of $\m{P}_k$ is given as follows.
\begin{definition}{(Parametrized Reference Q-Function)}\label{def:Q_function}
	Let
	\begin{align}\label{eq:Q-function}
		\begin{aligned}
			Q^{\m{\pi}}\left(\m{x}_k,\m{u}_k,\m{P}_k\right)&=c\left(\m{x}_k,\m{u}_k,\m{r}(\m{P}_k,0)\right)+\!\sum_{i=1}^{\infty}\!\gamma^i c\!\left(\m{x}_{k+i},\m{\pi}\!\left(\m{x}_{k+i}, \m{P}_k^{(i)}\right)\!,r(\m{P}_k,i)\!\right) \\&=c\left(\m{x}_k,\m{u}_k,\m{r}(\m{P}_k,0)\right)+\gamma Q^{\m{\pi}}\left(\m{x}_{k+1},\m{\pi}\left(\m{x}_{k+1}, \m{P}_k^{(1)}\right)\!,\m{P}_k^{(1)}\right).
		\end{aligned}
	\end{align}
\end{definition}
Here, $\m{\pi}: \mathbb{R}^{n}\times \mathbb{R}^{n\times p} \rightarrow \mathbb{R}^m$ denotes the current control policy.

Therefore, $Q^{\m{\pi}}(\m{x}_k,\m{u}_k,\m{P}_k)$ represents the accumulated discounted cost if the system is in state $\m{x}_k$, the control $\m{u}_k$ is applied at time $k$ and the policy $\m{\pi}(\cdot)$ is followed thereafter while the reference trajectory is parametrized by $\m{P}_k$. Based on \eqref{eq:Q-function}, the optimal Q-function $Q^*(\cdot)$ is given by
\begin{align}\label{eq:Q_opt}
	\begin{aligned}
		Q^{*}\left(\m{x}_k,\m{u}_k,\m{P}_k\right)&=c\left(\m{x}_k,\m{u}_k,\m{r}(\m{P}_k,0)\right)+\min_{\m{\pi}}\gamma Q^{\m{\pi}}\!\left(\m{x}_{k+1},\m{\pi}\!\left(\m{x}_{k+1}, \m{P}_k^{(1)}\right)\!\!,\m{P}_k^{(1)}\right)	
		\\&=c\left(\m{x}_k,\m{u}_k,\m{r}(\m{P}_k,0)\right)+\gamma Q^{*}\left(\m{x}_{k+1},\m{\pi}^*\!\left(\m{x}_{k+1}, \m{P}_k^{(1)}\right)\!\!,\m{P}_k^{(1)}\right).
	\end{aligned}
\end{align}
Here, the optimal control policy is denoted by $\m{\pi}^*(\cdot)$, hence $\m{\pi}^*(\m{x}_{k+1}, \m{P}_k^{(1)}) = \m{u}^*_{k+1}$. This Q-function is useful for solving Problem~\ref{problem:p1} as can be seen from the following Lemma.
\begin{lemma}\label{lem:QoptJ}
	The control $\m{u}_k$ minimizing $Q^*\left(\m{x}_k,\m{u}_k,\m{P}_k\right)$ is a solution for $\m{u}_k^*$ minimizing $J_k$ in \eqref{eq:Jk} according to Problem~\ref{problem:p1}.
\end{lemma}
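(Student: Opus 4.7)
The plan is to show that $Q^{*}(\m{x}_k,\m{u}_k,\m{P}_k)$ coincides with the optimal cost-to-go obtained by applying $\m{u}_k$ at step $k$ and then acting optimally for all subsequent steps against the reference parametrized by $\m{P}_k$. Once this identification is made, minimizing both sides over $\m{u}_k$ immediately yields that $\argmin_{\m{u}_k}Q^{*}(\m{x}_k,\m{u}_k,\m{P}_k)=\m{u}_k^{*}$.

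First, for an arbitrary policy $\m{\pi}$, I would show that $Q^{\m{\pi}}(\m{x}_k,\m{u}_k,\m{P}_k)$ is exactly the infinite-horizon discounted cost obtained by applying $\m{u}_k$ at step $k$ and then $\m{u}_{k+i}=\m{\pi}(\m{x}_{k+i},\m{P}_k^{(i)})$ for $i\geq 1$. This reduces to verifying that the reference arguments of the one-step costs in the sum~\eqref{eq:Q-function} coincide with those of $J_k$ in~\eqref{eq:Jk}. This is precisely what Definition~\ref{def:shifted_P} provides: plugging $j=0$ into~\eqref{eq:def_condition_trans} yields $\m{r}(\m{P}_k^{(i)},0)=\m{r}(\m{P}_k,i)$, and more generally the shifted parametrization reproduces the tail of the original reference beginning at step $k+i$. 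Hence each term $c(\m{x}_{k+i},\m{u}_{k+i},\m{r}(\m{P}_k,i))$ entering $J_k$ matches the corresponding term in the expansion of $Q^{\m{\pi}}$.

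Second, I would invoke the Bellman principle of optimality. Taking the infimum over the subsequent actions in the telescoped identity turns $Q^{\m{\pi}}$ into $Q^{*}$ of~\eqref{eq:Q_opt}, and the structural observation that the shifted reference $\m{P}_k^{(i)}$ absorbs the relative time $i$ means the sub-problem at step $k+i$ is again an instance of Problem~\ref{problem:p1} with reference parameters $\m{P}_k^{(i)}$. This justifies the stationary-policy representation $\m{\pi}^{*}(\cdot,\m{P}_k^{(i)})$ used in~\eqref{eq:Q_opt} and shows that, with $\m{u}_k$ regarded as fixed, $Q^{*}(\m{x}_k,\m{u}_k,\m{P}_k)$ equals the minimal value of $J_k$ over all tail sequences $\m{u}_{k+1},\m{u}_{k+2},\dots$. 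Minimizing once more over $\m{u}_k$ then gives $\min_{\m{u}_k}Q^{*}(\m{x}_k,\m{u}_k,\m{P}_k)=J_k^{*}$, and the outer minimizer must be an admissible choice of $\m{u}_k^{*}$.

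The main obstacle I anticipate is the subtlety flagged after~\eqref{eq:P_lin}: for $p>1$ the matrix $\m{T}(i)$, and hence $\m{P}_k^{(i)}$, is not uniquely determined by~\eqref{eq:def_condition_trans}. I would need to argue that any valid choice still produces the same reference sequence $\{\m{r}(\m{P}_k,i)\}_{i\geq 0}$ along the trajectory — which is guaranteed by~\eqref{eq:subeqb} — so that every term in the cost and hence the value $Q^{*}(\m{x}_k,\m{u}_k,\m{P}_k)$ and its $\m{u}_k$-minimizer are well-defined, independently of the ambiguity. A minor secondary concern is well-posedness of the infinite sum, which follows from $\gamma\in[0,1)$ and the non-negativity of $c(\cdot)$ under the tacit assumption that $J_k^{*}$ is finite.
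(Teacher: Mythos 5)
Your proposal is correct and follows essentially the same route as the paper's proof: expand $Q^*$ via the Bellman-like recursion \eqref{eq:Q_opt}, use the shifted parametrization of Definition~\ref{def:shifted_P} so the one-step costs coincide with those in $J_k$, and invoke the principle of optimality to conclude $\min_{\m{u}_k}Q^*(\m{x}_k,\m{u}_k,\m{P}_k)=J_k^*$. The paper states this as a single chain of equalities, whereas you additionally spell out the well-definedness under the ambiguity of $\m{T}(i)$ and the convergence of the discounted sum; these are welcome refinements, not a different argument.
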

\begin{proof}
	With \eqref{eq:Q_opt}
	\begin{align}
		\begin{aligned}
			\min_{\m{u}_k}Q^*\left(\m{x}_k,\m{u}_k,\m{P}_k\right)&=c\left(\m{x}_k,\m{u}_k^*,\m{r}(\m{P}_k,0)\right)+\gamma Q^*\left(\m{x}_{k+1},\m{u}_{k+1}^*,\m{P}_k^{(1)}\right)\\&=\min_{\m{u}_k,\m{u}_{k+1},\dots}\sum_{i=0}^{\infty}\gamma^i c\left(\m{x}_i,\m{u}_i,\m{r}(\m{P}_k,i)\right)\\&=J_k^*
		\end{aligned}	
	\end{align}	
	follows, which completes the proof.
\end{proof}
Thus, if the Q-function $Q^*(\m{x}_k,\m{u}_k,\m{P}_k)$ is known, the desired optimal control $\m{u}_k$ is given by
\begin{align}\label{eq:u_aus_Q}
	\m{u}_k^*=\argmin_{\m{u}_k}Q^*(\m{x}_k,\m{u}_k,\m{P}_k).
\end{align}

%\begin{note}
%	In order to consider the relative position $i$ on the current reference trajectory that is parametrized by means of $\m{P}_{k}$, one could explicitly incorporate the relative time $i$ into the Q-function. This would yield a Q-function of the form $Q(\m{x}_k,\m{u}_k,\m{P}_{k},i)$. However, this would unnecessarily increase the complexity of the Q-function and hence the challenge to approximate and learn such a Q-function. Thus, we decided to implicitly incorporate the relative time $i$ on the current reference trajectory parametrized by $\m{P}_{k}$ yielding $\m{P}_{k}^{(i)}$ according to Definition~\ref{def:shifted_P} which is used in Definition~\ref{def:Q_function}.
%\end{note}

Lemma~\ref{lem:QoptJ} and \eqref{eq:u_aus_Q} reveal the usefulness of $Q^*\left(\m{x}_k,\m{u}_k,\m{P}_k\right)$ for solving Problem~\ref{problem:p1}. Thus, we express this Q-function by means of linear function approximation in the following. Based on the temporal-difference (TD) error, the unknown Q-function weights can then be estimated.

\subsection{Function Approximation of the Tracking Q-Function}
As classical tabular Q-learning is unable to cope with large or even continuous state and control spaces, it is common to represent the Q-function, which is assumed to be smooth, by means of function approximation \cite{Busoniu.2010}. This leads to
\begin{align}
	Q^*\left(\m{x}_k,\m{u}_k,\m{P}_k\right)=\m{w}^\intercal\m{\phi}\left(\m{x}_k,\m{u}_k,\m{P}_k\right)+\epsilon\left(\m{x}_k,\m{u}_k,\m{P}_k\right).
\end{align}
Here, $\m{w}\in\mathbb{R}^q$ is the unknown optimal weight vector, $\m{\phi}\in\mathbb{R}^q$ a vector of activation functions and $\epsilon$ the approximation error. According to the Weierstrass higher-order approximation Theorem \cite{Hornik.1990} a single hidden layer and appropriately smooth hidden layer activation functions $\m{\phi}(\cdot)$ are capable of an arbitrarily accurate approximation of the Q-function. Furthermore, if $q\rightarrow \infty$, $\epsilon \rightarrow 0$.

As $\m{w}$ is a priori unknown, let the estimated optimal Q-function be given by 
\begin{align}\label{eq:Q_paramet}
	\hat{Q}^*\left(\m{x}_k,\m{u}_k,\m{P}_k\right)=\hat{\m{w}}^\intercal\m{\phi}\left(\m{x}_k,\m{u}_k,\m{P}_k\right).
\end{align}
In analogy to \eqref{eq:u_aus_Q}, the estimated optimal control law is defined as
\begin{align}\label{eq:argmin_u_Q}
	\hat{\m{\pi}}^*(\m{x}_k, \m{P}_k)=\argmin_{\m{u}_k}\hat{Q}^*\left(\m{x}_k,\m{u}_k,\m{P}_k\right).
\end{align}
Based on this parametrization of our new Q-function, the associated TD error \cite{Sutton.1988} is defined as follows.
\begin{definition}{(TD Error of the Tracking Q-Function)}\label{def:TD}
	The TD error which results from using the estimated Q-function $\hat{Q}^*(\cdot)$ \eqref{eq:Q_paramet} in the Bellman-like equation \eqref{eq:Q_opt} is defined as
	\begin{align}\label{eq:TD_error}
		\begin{aligned}
			\delta_k &= c\left(\m{x}_k,\m{u}_k,\m{r}(\m{P}_k,0)\right)+\gamma \hat{Q}^{*}\left(\m{x}_{k+1},\hat{\m{\pi}}^*\left(\m{x}_{k+1}, \m{P}_k^{(1)}\right)\!,\m{P}_k^{(1)}\right)-\hat{Q}^{*}\left(\m{x}_k,\m{u}_k,\m{P}_k\right)\\
			&= c\left(\m{x}_k,\m{u}_k,\m{r}(\m{P}_k,0)\right)+\gamma\hat{\m{w}}^\intercal \m{\phi}\left(\m{x}_{k+1},\hat{\m{\pi}}^*\left(\m{x}_{k+1}, \m{P}_k^{(1)}\right)\!,\m{P}_k^{(1)}\right) -\hat{\m{w}}^\intercal\m{\phi}\left(\m{x}_k,\m{u}_k,\m{P}_k\right).
		\end{aligned}
	\end{align}
\end{definition}

Our goal is to estimate $\hat{\m{w}}$ in order to minimize the squared TD error $\delta_k^2$ as the TD error quantifies the quality of the Q-function approximation. However, \eqref{eq:TD_error} is scalar while $q$ weights need to be estimated. Thus, we utilize $N\geq q$ tuples
%\begin{align}\label{eq:Tk}
%	\begin{aligned}
%		\mathcal{T}_k &= \left\{c_k, \hat{Q}^*_k, \hat{Q}^{*+}_k\right\}, k = 1, \dots, N,
%		\intertext{where}
%		c_k &=c\left(\m{x}_k,\m{u}_k,\m{r}(\m{P}_k,0)\right),\\
%		\hat{Q}^*_k &= \hat{\m{w}}^\intercal\m{\phi}_k = \hat{\m{w}}^\intercal\m{\phi}\left(\m{x}_k,\m{u}_k,\m{P}_k\right)
%		\intertext{and}
%		\hat{Q}^{*+}_k &= \hat{\m{w}}^\intercal\m{\phi}_k^+ = \hat{\m{w}}^\intercal \m{\phi}\left(\m{x}_{k+1},\hat{\m{\pi}}^*\left(\m{x}_{k+1}, \m{P}_k^{(1)}\right)\!,\m{P}_k^{(1)}\right)
%	\end{aligned}
%\end{align}
\begin{align}\label{eq:Tk}
\mathcal{T}_k &= \left\{c_k, \hat{Q}^*_k, \hat{Q}^{*+}_k\right\}, k = 1, \dots, N,\nonumber
\intertext{where}
c_k &=c\left(\m{x}_k,\m{u}_k,\m{r}(\m{P}_k,0)\right),\nonumber\\
\hat{Q}^*_k &= \hat{\m{w}}^\intercal\m{\phi}_k = \hat{\m{w}}^\intercal\m{\phi}\left(\m{x}_k,\m{u}_k,\m{P}_k\right)\nonumber
\intertext{and}
\hat{Q}^{*+}_k &= \hat{\m{w}}^\intercal\m{\phi}_k^+ = \hat{\m{w}}^\intercal \m{\phi}\left(\m{x}_{k+1},\hat{\m{\pi}}^*\left(\m{x}_{k+1}, \m{P}_k^{(1)}\right)\!,\m{P}_k^{(1)}\right)
\end{align}
from interaction with the system in order to estimate $\hat{\m{w}}$ using Least-Squares Policy Iteration (LSPI) (cf. Lagoudakis and Parr\cite{Lagoudakis.2003}). Stacking \eqref{eq:TD_error} for the tuples $\mathcal{T}_k$, $k=1,\dots, N$, yields
\begin{align}
	\begin{aligned}
		\underbrace{\matc{\delta_1\\ \vdots \\ \delta_N}}_{\m{\delta}}&=\underbrace{\matc{c_1\\ \vdots \\ c_N}}_{\m{c}}+\underbrace{\left(\gamma\matc{\m{\phi}_1^{+\intercal}\\ \vdots\\ \m{\phi}_N^{+\intercal}}-\matc{\m{\phi}_1^{\intercal}\\ \vdots\\ \m{\phi}_N^{\intercal}}\right)}_{\m{\Phi}}\hat{\m{w}}.
	\end{aligned}
\end{align}
If the excitation condition
\begin{align}\label{eq:rank_condition}
	\rank{\m{\Phi}^\intercal\m{\Phi}}=q
\end{align}
holds, $\hat{\m{w}}$ minimizing $\m{\delta}^\intercal\m{\delta}$ exists, is unique and given by
\begin{align}\label{eq:LS}
	\hat{\m{w}}=\left(\m{\Phi}^\intercal\m{\Phi}\right)^{-1}\m{\Phi}^\intercal\m{c}
\end{align}
according to {\AA}str\"om and Wittenmark, Theorem~2.1.\cite{Astrom.1995}

\begin{note}\label{note:markov}
	Using $\m{P}_k^{(1)}=\m{P}_k\m{T}(1)$ \eqref{eq:P_lin} in the training tuple $\mathcal{T}_k$ \eqref{eq:Tk} rather than an arbitrary subsequent $\m{P}_{k+1}$ guarantees (in combination with \eqref{eq:system_general}) that the Markov property holds, which is commonly required in ADP.\cite{Lewis.2009}
\end{note}

\begin{remark}
	The procedure described above is an extension of Lagoudakis and Parr, Section~5.1\cite{Lagoudakis.2003} to the tracking case where minimizing the squared TD error is targeted. In addition, an alternative projection method  described by Lagoudakis and Parr, Section~5.2\cite{Lagoudakis.2003} which targets the approximate Q-function to be a fixed point under the Bellman operator has been implemented. Both procedures yielded indistinguishable results for our linear-quadratic simulation examples but might be different in the general case.
\end{remark}

Note that $\hat{\m{\pi}}^*(\cdot)$ in $\hat{Q}^{*+}_k$ depends on $\hat{\m{w}}$, i.e. the estimation of $\hat{Q}^{*+}_k$ depends on another estimation (of the optimal control law). This mechanism is known as \textit{bootstrapping} (cf. Sutton and Barto\cite{Sutton.2018}) in Reinforcement Learning. As a consequence, rather than estimating $\hat{\m{w}}$ once by means of the least-squares estimate \eqref{eq:LS}, a \textit{policy iteration} is performed starting with $\hat{\m{w}}^{(0)}$. This procedure is given in Algorithm~\ref{alg:LSPI}, where $e_{\hat{\m{w}}}$ is a threshold for the terminal condition.

\begin{note}
	Due to the use of a Q-function which explicitly depends on the control $\m{u}_k$, this method performs off-policy learning.\cite{Sutton.2018} Thus, during training, the behavior policy (i.e. $\m{u}_k$ which is actually applied to the system) might include exploration noise in order to satisfy the rank condition \eqref{eq:rank_condition} but due to the greedy target policy $\hat{\m{\pi}}^*$ (cf. the policy improvement step \eqref{eq:argmin_u_Q}), the Q-function associated with the optimal control law is learned.
\end{note}

\begin{algorithm}[b!]
	\caption{PRADP based on LSPI}\label{alg:LSPI}
	\begin{algorithmic}[1]
		\Initialize{$i=0, \hat{\m{w}}^{(0)}$}
		\Do
		\State policy evaluation: calculate $\hat{\m{w}}^{(i+1)}$ according to \eqref{eq:LS}, where $\hat{\m{w}}=\hat{\m{w}}^{(i+1)}$
		\State policy improvement: obtain $\hat{\m{\pi}}^{(i+1)}$ from \eqref{eq:argmin_u_Q}
		\State $i = i+1$
		\doWhile{$\norm{\hat{\m{w}}^{(i)}-\hat{\m{w}}^{(i-1)}}_2>e_{\hat{\m{w}}}$}%~||~i<i_{\text{max}}$}
	\end{algorithmic}
\end{algorithm}

With $\hat{Q}^{(i)}(\cdot)=\hat{\m{w}}^{(i)\intercal}\m{\phi}(\cdot)$ and $Q^{\hat{\m{\pi}}^{(i)}}$ according to \eqref{eq:Q-function} with $\m{\pi}=\hat{\m{\pi}}^{(i)}$, the following convergence properties also hold for our tracking Q-function.
\begin{theorem}{(Convergence Properties of the Q-function, cf. Lagoudakis and Parr, Theorem~7.1 \cite{Lagoudakis.2003})}\label{thm:convergence}
	Let $\bar{\epsilon}\geq 0$ bound the errors between the approximate Q-function $\hat{Q}^{(i)}$ and true Q-function $Q^{\hat{\m{\pi}}^{(i)}}$ associated with $\hat{\m{\pi}}^{(i)}$ over all iterations, i.e.
	\begin{align}
		\norm{\hat{Q}^{(i)}-Q^{\hat{\m{\pi}}^{(i)}}}_\infty\leq\bar{\epsilon},\forall i=1,2,\dots .
	\end{align}
	Then, Algorithm~\ref{alg:LSPI} yields control laws such that
	\begin{align}
		\limsup_{i\rightarrow\infty}\norm{\hat{Q}^{(i)}-Q^*}_\infty\leq\frac{2\gamma\bar{\epsilon}}{(1-\gamma)^2}.
	\end{align}
\end{theorem}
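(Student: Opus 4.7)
The plan is to reduce Theorem~\ref{thm:convergence} to the standard approximate policy iteration convergence analysis (Lagoudakis and Parr, Theorem~7.1; cf. Bertsekas and Tsitsiklis, Proposition~6.2) by exhibiting the tracking Q-function as the Q-function of an ordinary MDP on the augmented state $(\m{x}_k,\m{P}_k)$. The only nonstandard ingredient is the reference parameter, whose deterministic update $\m{P}_k \mapsto \m{P}_k \m{T}(1)$, inherited from Definition~\ref{def:shifted_P} and highlighted in Note~\ref{note:markov}, is precisely what guarantees the Markov property required by that analysis.

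First, I would introduce the Bellman operators on the space of bounded real-valued functions of $(\m{x},\m{u},\m{P})$ by
\begin{align*}
(T^{\m{\pi}}Q)(\m{x},\m{u},\m{P}) &= c(\m{x},\m{u},\m{r}(\m{P},0)) + \gamma Q\bigl(\m{f}(\m{x},\m{u}),\m{\pi}(\m{f}(\m{x},\m{u}),\m{P}\m{T}(1)),\m{P}\m{T}(1)\bigr),\\
(T^{*}Q)(\m{x},\m{u},\m{P}) &= c(\m{x},\m{u},\m{r}(\m{P},0)) + \gamma \min_{\m{u}'} Q\bigl(\m{f}(\m{x},\m{u}),\m{u}',\m{P}\m{T}(1)\bigr).
\end{align*}
The usual pointwise argument, which only exploits $\gamma \in [0,1)$, shows both operators are $\gamma$-contractions in the sup-norm, so by the Banach fixed-point theorem the $Q^{\m{\pi}}$ of Definition~\ref{def:Q_function} and the $Q^{*}$ of \eqref{eq:Q_opt} are their unique fixed points.

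Second, I would run the standard error-propagation recursion. With $\hat{\m{\pi}}^{(i+1)}$ greedy with respect to $\hat{Q}^{(i)}$, set $e_i = Q^{\hat{\m{\pi}}^{(i)}} - Q^{*}$. Combining the $\gamma$-contraction of $T^{\hat{\m{\pi}}^{(i+1)}}$ around $Q^{\hat{\m{\pi}}^{(i+1)}}$, the greedy identity $T^{\hat{\m{\pi}}^{(i+1)}}\hat{Q}^{(i)} = T^{*}\hat{Q}^{(i)}$, and the hypothesis $\norm{\hat{Q}^{(i)}-Q^{\hat{\m{\pi}}^{(i)}}}_\infty \le \bar{\epsilon}$, one obtains, in the manner of Bertsekas and Tsitsiklis (Prop.~6.2), the one-step inequality
\begin{align*}
\norm{e_{i+1}}_\infty \le \gamma \norm{e_i}_\infty + \frac{2\gamma\bar{\epsilon}}{1-\gamma}.
\end{align*}
Iterating and taking $\limsup$ yields $\limsup_i \norm{e_i}_\infty \le 2\gamma\bar{\epsilon}/(1-\gamma)^2$, and a final triangle-inequality step against the approximation bound $\bar{\epsilon}$ produces the stated estimate on $\norm{\hat{Q}^{(i)}-Q^{*}}_\infty$.

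The principal obstacle is essentially conceptual rather than computational: one must check that every step of the Lagoudakis--Parr argument carries through when the underlying state is augmented by the reference parameter $\m{P}_k$. The key point is that because $\m{P}_k^{(1)} = \m{P}_k \m{T}(1)$ is a deterministic function of $\m{P}_k$ alone and the one-step cost $c(\m{x}_k,\m{u}_k,\m{r}(\m{P}_k,0))$ depends only on the current augmented state and control, the tracking problem really is a bona fide (augmented-state) MDP. Once this identification is in place, no new estimates are required and the theorem reduces to a direct specialization of the cited result.
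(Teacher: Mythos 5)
Your proposal follows essentially the same route as the paper, whose entire proof is the one-line adaptation of Bertsekas and Tsitsiklis, Proposition~6.2 (equivalently Lagoudakis and Parr, Theorem~7.1), justified exactly by the observation you make: the deterministic update $\m{P}_k^{(1)}=\m{P}_k\m{T}(1)$ turns the tracking problem into an ordinary MDP on the augmented state $(\m{x}_k,\m{P}_k)$, so the standard contraction and error-propagation machinery applies verbatim. One minor caveat: the cited results bound $\norm{Q^{\hat{\m{\pi}}^{(i)}}-Q^*}_\infty$, so your final triangle-inequality step actually delivers $2\gamma\bar{\epsilon}/(1-\gamma)^2+\bar{\epsilon}$ for $\norm{\hat{Q}^{(i)}-Q^*}_\infty$ rather than the stated constant --- a discrepancy inherited from the theorem's own restatement of the cited bound, not a flaw in your approach.
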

\begin{proof}
	The proof is adapted from Bertsekas and Tsitsiklis, Proposition~6.2\cite{Bertsekas.1996}.
\end{proof}
Lagoudakis and Parr\cite{Lagoudakis.2003} point out that the appropriate choice of basis functions and the sample distribution (i.e. excitation) determine $\bar{\epsilon}$.
According to Theorem~\ref{thm:convergence}, Algorithm~\ref{alg:LSPI} converges to a neighborhood of the optimal tracking Q-function under appropriate choice of basis functions $\m{\phi}(\cdot)$ and excitation. However, for general nonlinear systems \eqref{eq:system_general} and cost functions \eqref{eq:Jk}, an appropriate choice of basis functions and the number of neurons is ``more of an art than science''\cite{Wang.2017} and still an open problem. As the focus of this paper lies on the new Q-function for tracking purposes rather than tuning of neural networks, we focus on linear systems and quadratic cost functions in the following---a setting that plays an important role in control engineering. This allows analytic insights into the structure of $Q^*\left(\m{x}_k,\m{u}_k,\m{P}_k\right)$ and thus proper choice of $\m{\phi}(\cdot)$ for function approximation in order to demonstrate the effectiveness of the proposed PRADP method.

\subsection{The LQ-Tracking Case}
\tdd{Oder ab hier bis inklusive Problem~2 nach oben in problem definition? }In the following, assume
\begin{align}\label{eq:sys_linear}
	\m{x}_{k+1}&=\m{A}\m{x}_k+\m{B}\m{u}_k,
\end{align}
and
\begin{align}
	J_k &= \sum_{i=0}^{\infty}\gamma^i\left[\left(\m{x}_{k+i}-\m{r}(\m{P}_k,i)\right)^\intercal \m{Q}\left(\m{x}_{k+i}-\m{r}(\m{P}_k,i)\right)\right.\left.+\m{u}_{k+i}^\intercal\m{R}\m{u}_{k+i}\right]\nonumber\\
	&\eqqcolon\sum_{i=0}^{\infty}\gamma^i\left[\m{e}_{k,i}^\intercal \m{Q}\m{e}_{k,i}+\m{u}_{k+i}^\intercal\m{R}\m{u}_{k+i}\right]\label{eq:J_quadratic}.
\end{align}
Here, $\m{Q}$ penalizes the deviation of the state $\m{x}_{k+i}$ from the reference $\m{r}(\m{P}_k,i)$ and $\m{R}$ penalizes the control effort. Furthermore, let the following assumptions hold.

\begin{assumption}\label{asm:LQR}
	Let $\m{Q}=\m{Q}^\intercal\succeq \m{0}$, $\m{R}=\m{R}^\intercal\succ \m{0}$, $(\m{A},\m{B})$ be controllable and $(\m{C},\m{A})$ be detectable, where $\m{C}^\intercal\m{C}=\m{Q}$.
\end{assumption}
\begin{assumption}\label{asm:Tracking}
	Let the matrix $\m{T}(i)$ which defines the shifted parameter matrix $\m{P}_k^{(i)}$ according to \eqref{eq:P_lin} be such that $\abs{\lambda_j}<1$, $\forall j=1,\dots,p$,	holds, where $\lambda_j$ are the eigenvalues of $\sqrt{\gamma}\m{T}(1)$.
\end{assumption}

\begin{note}
	Assumption~\ref{asm:LQR} is rather standard in the LQ setting in order to ensure the existence and uniqueness of a stabilizing solution to the discrete-time algebraic Riccati equation associated with the regulation problem given by \eqref{eq:sys_linear} and \eqref{eq:J_quadratic} for $\m{P}_k = \m{0}$ (cf. Ku{\v{c}}era, Theorem~8\cite{Kucera.1972}). Furthermore, it is obvious that the reference trajectory $\m{r}(\m{P}_k,i)$ must be defined such that a controller exists which yields finite cost $J_k$ in order to obtain a reasonable control problem. As will be seen in Theorem~\ref{thm:feedback}, Assumption~\ref{asm:Tracking} guarantees the existence of this solution.
\end{note}

The tracking error $\m{e}_{k,i}$ can be expressed as
\begin{align}\label{eq:error_e}
	\begin{aligned}
		\m{e}_{k,i}&=\m{x}_{k+i}-\m{r}(\m{P}_k,i)=\m{x}_{k+i}-\m{P}_k^{(i)}\m{\rho}(0)=\underbrace{\matc{\m{I}_n & \quad\matc{-\m{\rho}(0)&\cdots&\m{0}\\   \vdots&\ddots&\vdots\\ \m{0}&\cdots&-\m{\rho}(0)}^\intercal}}_{\eqqcolon\m{M}}\underbrace{\matc{\m{x}_{k+i}\\\m{p}_{k,1}^{(i)}\\ \vdots\\\m{p}_{k,n}^{(i)}}}_{\eqqcolon\m{y}_{k,i}},
	\end{aligned}
\end{align}
$i=0,1,\dots$, where $\m{I}_n$ denotes the $n\times n$ identity matrix and $\m{y}_{k,i}$ the extended state. The associated optimal controller is given in the following Theorem.
\begin{theorem}{(Optimal Tracking Control Law)}\label{thm:feedback}
	Let a reference \eqref{eq:reference} with shift matrix $\m{T}(i)$ as in Definition~\ref{def:shifted_P} be given. Then,
	\begin{enumerate}[(i)]
		\item the optimal controller which minimizes \eqref{eq:J_quadratic} subject to the system dynamics \eqref{eq:sys_linear} and the reference is linear w.r.t. $\m{y}_{k,i}$ (cf. \eqref{eq:error_e}) and can be stated as
		\begin{align}\label{eq:optimal_control_in_thm}
			\m{\pi}^*(\m{x}_{k+i}, \m{P}_k^{(i)})={\m{u}}^*_{k+i}=-\m{L}\m{y}_{k,i},
		\end{align}
		$i=0,1,\dots$.
		Here, the optimal gain $\m{L}$ is given by
		\begin{align}\label{eq:L_optimal}
			\m{L}&=(\gamma\tilde{\m{B}}^\intercal\tilde{\m{S}}\tilde{\m{B}}+\m{R})^{-1}\gamma\tilde{\m{B}}^\intercal\tilde{\m{S}}\tilde{\m{A}},
		\end{align}
		where
		
		\begin{align}
			\tilde{\m{A}}&=\matc{\m{A}&\m{0}&\cdots&\m{0}\\\m{0}&\m{T}(1)^\intercal&\cdots&\m{0}\\ \vdots&\vdots&\ddots&\vdots\\\m{0}&\m{0}&\cdots&\m{T}(1)^\intercal}\!,\quad \tilde{\m{B}}=\matc{\m{B}\\ \m{0} \\ \vdots \\ \m{0}},
		\end{align}
		$\tilde{\m{A}}\in\mathbb{R}^{n(p+1)\times n(p+1)}$, $\tilde{\m{B}}\in\mathbb{R}^{n(p+1)\times m}$, $\tilde{\m{Q}}=\m{M}^\intercal\m{Q}\m{M}$ and $\tilde{\m{S}}$ denotes the solution of the discrete-time algebraic Riccati equation
		\begin{align}
			\tilde{\m{S}}=\gamma\tilde{\m{A}}^\intercal\tilde{\m{S}}\tilde{\m{A}}-\gamma\tilde{\m{A}}^\intercal\tilde{\m{S}}\tilde{\m{B}}(\m{R}+\tilde{\m{B}}^\intercal\tilde{\m{S}}\tilde{\m{B}})^{-1}\tilde{\m{B}}^\intercal\tilde{\m{S}}\tilde{\m{A}}+\tilde{\m{Q}}\!.
		\end{align}

		\item Furthermore, under Assumptions~\ref{asm:LQR}--\ref{asm:Tracking}, the optimal controller $\m{\pi}^*(\m{x}_{k+i}, \m{P}_k^{(i)})$ exists and is unique.
	\end{enumerate}
\end{theorem}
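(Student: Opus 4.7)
The plan is to recast the tracking problem as a standard discounted linear-quadratic regulation problem on the extended state $\m{y}_{k,i}$ defined in \eqref{eq:error_e}. Once this is achieved, part (i) reduces to invoking the well-known discounted LQR solution, and part (ii) reduces to verifying the standard stabilizability/detectability hypotheses on the extended pair.

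First I would derive the dynamics of $\m{y}_{k,i}$ as $i$ increments. The state block obeys $\m{x}_{k+i+1}=\m{A}\m{x}_{k+i}+\m{B}\m{u}_{k+i}$ by \eqref{eq:sys_linear}. For the parameter block, observe from Definition~\ref{def:shifted_P} that $\m{P}_k^{(i+1)}\m{\rho}(j)=\m{P}_k\m{\rho}(i+1+j)=\m{P}_k^{(i)}\m{\rho}(j+1)=\m{P}_k^{(i)}\m{T}(1)\m{\rho}(j)$ for all $j$, so we may take $\m{P}_k^{(i+1)}=\m{P}_k^{(i)}\m{T}(1)$. Row-wise this gives $\m{p}_{k,l}^{(i+1)}=\m{T}(1)^\intercal\m{p}_{k,l}^{(i)}$ for $l=1,\dots,n$. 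Stacking these with the state recursion produces exactly $\m{y}_{k,i+1}=\tilde{\m{A}}\m{y}_{k,i}+\tilde{\m{B}}\m{u}_{k+i}$, with $\tilde{\m{A}}$ and $\tilde{\m{B}}$ as in the theorem statement. Combining this with $\m{e}_{k,i}=\m{M}\m{y}_{k,i}$ from \eqref{eq:error_e} rewrites the cost as
\begin{align*}
J_k=\sum_{i=0}^{\infty}\gamma^i\bigl[\m{y}_{k,i}^\intercal\tilde{\m{Q}}\m{y}_{k,i}+\m{u}_{k+i}^\intercal\m{R}\m{u}_{k+i}\bigr],
\end{align*}
with $\tilde{\m{Q}}=\m{M}^\intercal\m{Q}\m{M}\succeq\m{0}$. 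This is a discounted LQR problem in the extended state.

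For part (i), I would apply the standard discounted discrete-time LQR result (equivalently, rescale by $\tilde{\m{A}}\mapsto\sqrt{\gamma}\tilde{\m{A}}$, $\tilde{\m{B}}\mapsto\sqrt{\gamma}\tilde{\m{B}}$ to reduce to the undiscounted case, cf.~Ku\v{c}era\cite{Kucera.1972}): whenever the associated algebraic Riccati equation admits a positive semidefinite solution $\tilde{\m{S}}$, the unique minimizer is the linear feedback $\m{u}^*_{k+i}=-\m{L}\m{y}_{k,i}$ with $\m{L}$ as in \eqref{eq:L_optimal}. Since $\m{y}_{k,i}$ aggregates the current state with the shifted reference parameters (both quantities available at run-time via \eqref{eq:P_lin}), this yields the claimed control law form.

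For part (ii), I must verify that the extended pair $(\sqrt{\gamma}\tilde{\m{A}},\sqrt{\gamma}\tilde{\m{B}})$ is stabilizable and $(\tilde{\m{C}},\sqrt{\gamma}\tilde{\m{A}})$ with $\tilde{\m{C}}^\intercal\tilde{\m{C}}=\tilde{\m{Q}}$ is detectable, so that the Riccati equation has a unique stabilizing positive semidefinite solution. Because $\tilde{\m{A}}$ is block diagonal, the PBH test decomposes: modes in the $\m{A}$-block are reachable/detectable by Assumption~\ref{asm:LQR}, while modes in each $\m{T}(1)^\intercal$-block are uncontrollable from $\tilde{\m{B}}$ (whose nonzero rows lie only in the state block) and not observed by $\tilde{\m{C}}$; therefore they must be asymptotically stable after the $\sqrt{\gamma}$ rescaling, which is exactly Assumption~\ref{asm:Tracking} since the eigenvalues of $\sqrt{\gamma}\m{T}(1)^\intercal$ coincide with those of $\sqrt{\gamma}\m{T}(1)$. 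I expect this decomposition argument to be the main obstacle: one has to carefully handle the fact that the reference dynamics block is intrinsically uncontrollable so that stabilizability alone is insufficient unless those modes are already stable, and likewise that $\m{M}$ may zero-out certain parameter directions in $\tilde{\m{Q}}$, so detectability cannot be inferred from $(\m{C},\m{A})$ alone and genuinely requires Assumption~\ref{asm:Tracking}. Once stabilizability and detectability are established, existence and uniqueness of $\tilde{\m{S}}\succeq\m{0}$, hence of the optimal gain $\m{L}$ in \eqref{eq:L_optimal}, follow from the classical Riccati theory cited for the regulation case.
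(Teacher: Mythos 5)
Your proposal is correct and follows essentially the same route as the paper: rewriting the cost and dynamics in the extended state $\m{y}_{k,i}$ to obtain a discounted LQR problem, reducing it to the undiscounted case via the $\sqrt{\gamma}$ rescaling, and establishing existence and uniqueness through stabilizability of $(\sqrt{\gamma}\tilde{\m{A}},\sqrt{\gamma}\tilde{\m{B}})$ and detectability of $(\tilde{\m{C}},\sqrt{\gamma}\tilde{\m{A}})$ under Assumptions~\ref{asm:LQR}--\ref{asm:Tracking}, invoking Ku{\v{c}}era's Riccati result. Your explicit PBH-based block decomposition is simply a more detailed spelling-out of the paper's remark that the added reference-parameter modes are uncontrollable but stable by Assumption~\ref{asm:Tracking}.
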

\begin{proof}
	(i) With \eqref{eq:error_e}, the discounted cost \eqref{eq:J_quadratic} can be reformulated as
	\begin{align}\label{eq:cost_reformulation}
		\begin{aligned}
			J_k &= \sum_{i=0}^{\infty}\gamma^i\left[{\m{y}_{k,i}}^\intercal\m{M}^\intercal\m{Q}\m{M}\m{y}_{k,i}+\m{u}_{k+i}^\intercal\m{R}\m{u}_{k+i}\right].
		\end{aligned}
	\end{align}
	Furthermore, note that with \eqref{eq:sys_linear} and \eqref{eq:P_lin}
	\begin{align}\label{eq:system_reformulation}
		\begin{aligned}
			\m{y}_{k,i+1}&=\matc{\m{A}\m{x}_{k+i}+\m{B}\m{u}_{k+i}\\ \m{T}(1)^\intercal\m{p}_{k,1}^{(i)}\\ \vdots \\ \m{T}(1)^\intercal\m{p}_{k,n}^{(i)}}=\tilde{\m{A}}\m{y}_{k,i}+\tilde{\m{B}}\m{u}_{k+i}	
			%		\\ &= \underbrace{\matc{\m{A}&\m{0}&\cdots&\m{0}\\\m{0}&\m{T}(1)^\intercal&\cdots&\m{0}\\ \vdots&\vdots&\ddots&\vdots\\\m{0}&\m{0}&\cdots&\m{T}(1)^\intercal}}_{\tilde{\m{A}}}\underbrace{\matc{\m{x}_{k+i}\\ \m{p}_{k,1}^{(i)}\\ \vdots \\ \m{p}_{k,n}^{(i)}}}_{\m{y}_{k,i}}+\underbrace{\matc{\m{B}\\ \m{0} \\ \vdots \\ \m{0}}}_{\tilde{\m{B}}}\m{u}_{k+i}
		\end{aligned}
	\end{align}
	holds. With $\gamma$, $\tilde{\m{A}}$, $\tilde{\m{B}}$, $\tilde{\m{Q}}$ and $\m{R}$, a standard \textit{discounted} LQ regulation problem results from \eqref{eq:cost_reformulation} for the extended state $\m{y}_{k,i}$. Considering that the discounted problem is equivalent to the undiscounted problem with $\sqrt{\gamma}\tilde{\m{A}}$, $\sqrt{\gamma}\tilde{\m{B}}$, $\tilde{\m{Q}}$ and $\m{R}$ (cf. Gaitsgory et al.\cite{Gaitsgory.2018}), the given problem can be reformulated to a standard undiscounted LQ problem. For the latter, it is well-known that the optimal controller is linear w.r.t. the state (here $\m{y}_{k,i}$) and the optimal gain is given by \eqref{eq:L_optimal} (see e.g. Lewis et al., Section~2.4\cite{Lewis.2012}), thus \eqref{eq:optimal_control_in_thm} holds and the first theorem assertion follows.
	
	(ii) For the second theorem assertion, we note that the stabilizability of $(\sqrt{\gamma}\tilde{\m{A}},\sqrt{\gamma}\tilde{\m{B}})$ directly follows from Assumptions~\ref{asm:LQR}--\ref{asm:Tracking}. In addition, $\m{Q}\succeq\m{0}$ yields $\tilde{\m{Q}}\succeq\m{0}$. As $(\m{C},\m{A})$ is detectable (Assumption~\ref{asm:LQR}), with $\tilde{\m{C}}^\intercal\tilde{\m{C}}=\tilde{\m{Q}}$, $(\tilde{\m{C}},\sqrt{\gamma}\tilde{\m{A}})$ is also detectable, because all additional states in $\tilde{\m{A}}$ compared to $\m{A}$ are stable due to Assumption~\ref{asm:Tracking}. Finally, due to $\tilde{\m{Q}}\succeq\m{0}$, $\m{R}\succ\m{0}$, $(\sqrt{\gamma}\tilde{\m{A}},\sqrt{\gamma}\tilde{\m{B}})$ stabilizable and $(\tilde{\m{C}},\sqrt{\gamma}\tilde{\m{A}})$ detectable, a unique stabilizing solution exists Ku{\v{c}}era, Theorem~8.\cite{Kucera.1972}
\end{proof}
\begin{note}\label{note:DARE}
	The proof of Theorem~\ref{thm:feedback} demonstrates that in case of \textit{known} system dynamics by means of $\m{A}$ and $\m{B}$, the optimal tracking controller $\m{L}$ can be directly calculated by solving the discrete-time algebraic Riccati equation \cite{Arnold.1984} associated with $\sqrt{\gamma}\tilde{\m{A}}$, $\sqrt{\gamma}\tilde{\m{B}}$, $\tilde{\m{Q}}$ and $\m{R}$.
\end{note}
Equation \eqref{eq:system_reformulation} demonstrates that the Markov property holds (cf. Note~\ref{note:markov}). As a consequence of Theorem~\ref{thm:feedback}, for \textit{unknown} system dynamics, this yields the following problem in the LQ PRADP case.
\begin{problem}\label{problem:LQ}
	For $i=0,1,\dots$, find the linear extended state feedback control \eqref{eq:optimal_control_in_thm}
	minimizing \eqref{eq:J_quadratic} and apply ${\m{u}}^*_{k}=-\m{L}\m{y}_{k,0}$ to the unknown system \eqref{eq:sys_linear}.
\end{problem}
Before we derive the control law $\m{L}$, we analyze the structure of $\hat{Q}^*\left(\m{x}_k,\m{u}_k,\m{P}_k\right)$ associated with Problem~\ref{problem:LQ} in the following Lemma.
\begin{lemma}{(Structure of the Tracking Q-Function)}\label{thm:structure}
	The Q-function associated with Problem~\ref{problem:LQ} has the form
	\begin{align}\label{eq:Q_quad}
		\begin{aligned}
			Q^*(\m{x}_k,\m{u}_k,\m{P}_k)&=\m{z}_k^\intercal\m{H}\m{z}_k= \matc{\m{x}_k\\\m{u}_k\\\m{p}_{k,1:n}}^\intercal \matc{\m{h}_{xx}&\m{h}_{xu}&\m{h}_{xp}\\ \m{h}_{ux}&\m{h}_{uu}&\m{h}_{up}\\ \m{h}_{px}&\m{h}_{pu}&\m{h}_{pp}} \matc{\m{x}_k\\\m{u}_k\\\m{p}_{k,1:n}},
		\end{aligned}
	\end{align}
	where $\m{z}_k = \mat{\m{x}_k^\intercal&\m{u}_k^\intercal&\m{p}_{k,1:n}^\intercal}^\intercal= \mat{\m{x}_k^\intercal&\m{u}_k^\intercal&\m{p}_{k,1}^\intercal&\dots&\m{p}_{k,n}^\intercal}^\intercal$ and $\m{H}$ is chosen such that $\m{H}=\m{H}^\intercal$.
\end{lemma}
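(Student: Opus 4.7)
My plan is to lift the entire problem into the extended-state LQ formulation already established in Theorem~\ref{thm:feedback}, where the Bellman equation becomes a standard discounted LQ one, and then unfold the Q-function by one step to expose its quadratic structure in $\m{z}_k$.

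Concretely, the first step is to recall from \eqref{eq:cost_reformulation} and \eqref{eq:system_reformulation} that, with the extended state $\m{y}_{k,i}=[\m{x}_{k+i}^\intercal,\m{p}_{k,1}^{(i)\intercal},\dots,\m{p}_{k,n}^{(i)\intercal}]^\intercal$, Problem~\ref{problem:LQ} is equivalent to a standard discounted LQ regulation problem with dynamics $\m{y}_{k,i+1}=\tilde{\m{A}}\m{y}_{k,i}+\tilde{\m{B}}\m{u}_{k+i}$ and stage cost $\m{y}_{k,i}^\intercal\tilde{\m{Q}}\m{y}_{k,i}+\m{u}_{k+i}^\intercal\m{R}\m{u}_{k+i}$. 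Under Assumptions~\ref{asm:LQR}--\ref{asm:Tracking}, which Theorem~\ref{thm:feedback}(ii) already leveraged, the optimal value function of this reformulated problem is the well-known quadratic form $V^*(\m{y}_{k,i})=\m{y}_{k,i}^\intercal\tilde{\m{S}}\m{y}_{k,i}$ with $\tilde{\m{S}}$ the discounted DARE solution.

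Next, I would use the Bellman-like identity \eqref{eq:Q_opt} one step forward, writing
\begin{align*}
Q^*(\m{x}_k,\m{u}_k,\m{P}_k)
= c(\m{x}_k,\m{u}_k,\m{r}(\m{P}_k,0)) + \gamma V^*(\m{y}_{k,1}),
\end{align*}
where the equality between $\min_{\m{u}_{k+1}}Q^*(\m{x}_{k+1},\m{u}_{k+1},\m{P}_k^{(1)})$ and $V^*(\m{y}_{k,1})$ follows from Lemma~\ref{lem:QoptJ} applied at step $k+1$ together with the LQ cost-to-go being $\m{y}_{k,1}^\intercal\tilde{\m{S}}\m{y}_{k,1}$. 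Substituting the quadratic stage cost $\m{y}_{k,0}^\intercal\tilde{\m{Q}}\m{y}_{k,0}+\m{u}_k^\intercal\m{R}\m{u}_k$ and the propagated state $\m{y}_{k,1}=\tilde{\m{A}}\m{y}_{k,0}+\tilde{\m{B}}\m{u}_k$ into $\gamma V^*(\m{y}_{k,1})$ yields
\begin{align*}
Q^*(\m{x}_k,\m{u}_k,\m{P}_k) = \m{y}_{k,0}^\intercal\tilde{\m{Q}}\m{y}_{k,0}+\m{u}_k^\intercal\m{R}\m{u}_k+\gamma(\tilde{\m{A}}\m{y}_{k,0}+\tilde{\m{B}}\m{u}_k)^\intercal\tilde{\m{S}}(\tilde{\m{A}}\m{y}_{k,0}+\tilde{\m{B}}\m{u}_k),
\end{align*}
which is a quadratic form in the stacked vector $[\m{y}_{k,0}^\intercal,\m{u}_k^\intercal]^\intercal$, i.e., in $[\m{x}_k^\intercal,\m{p}_{k,1}^\intercal,\dots,\m{p}_{k,n}^\intercal,\m{u}_k^\intercal]^\intercal$.

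Finally, to arrive at the exact statement of the lemma, I would permute the coordinates to the ordering $\m{z}_k=[\m{x}_k^\intercal,\m{u}_k^\intercal,\m{p}_{k,1:n}^\intercal]^\intercal$ and symmetrize the resulting Gram matrix (replacing it by its symmetric part, which preserves the value of the quadratic form) to read off the block decomposition $\m{H}=\m{H}^\intercal$ with sub-blocks $\m{h}_{xx},\m{h}_{xu},\m{h}_{xp},\m{h}_{uu},\m{h}_{up},\m{h}_{pp}$. No step involves a real obstruction; the only nontrivial ingredient is the quadratic form of $V^*$ on the extended state, which is already guaranteed by the reformulation and assumptions invoked in Theorem~\ref{thm:feedback}. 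The rest is bookkeeping and a coordinate permutation.
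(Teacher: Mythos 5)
Your argument is correct, but it takes a different route from the paper's. The paper proves the lemma directly from the series representation \eqref{eq:Q_in_proof}: using the linear dynamics \eqref{eq:sys_linear}, the linear optimal feedback \eqref{eq:optimal_control_in_thm} from Theorem~\ref{thm:feedback}, and the linear shift \eqref{eq:P_lin}, it notes that every $\m{x}_{k+i}$, $\m{\pi}^*(\m{x}_{k+i},\m{P}_k^{(i)})$ and hence every tracking error $\m{e}_{k,i}$ is linear in $\m{z}_k$, so each quadratic stage cost, and therefore the whole discounted sum, is a quadratic form in $\m{z}_k$. You instead unfold the Bellman identity \eqref{eq:Q_opt} a single step and invoke the quadratic optimal value function $V^*(\m{y})=\m{y}^\intercal\tilde{\m{S}}\m{y}$ of the extended-state discounted LQ problem (whose existence under Assumptions~\ref{asm:LQR}--\ref{asm:Tracking} is exactly what Theorem~\ref{thm:feedback}(ii) establishes), writing $Q^*=\m{y}_{k,0}^\intercal\tilde{\m{Q}}\m{y}_{k,0}+\m{u}_k^\intercal\m{R}\m{u}_k+\gamma\bigl(\tilde{\m{A}}\m{y}_{k,0}+\tilde{\m{B}}\m{u}_k\bigr)^\intercal\tilde{\m{S}}\bigl(\tilde{\m{A}}\m{y}_{k,0}+\tilde{\m{B}}\m{u}_k\bigr)$, then permuting and symmetrizing. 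Both arguments rest on the same reformulation \eqref{eq:cost_reformulation}--\eqref{eq:system_reformulation}; the identification $\min_{\m{u}_{k+1}}Q^*(\m{x}_{k+1},\m{u}_{k+1},\m{P}_k^{(1)})=V^*(\m{y}_{k,1})$ via Lemma~\ref{lem:QoptJ} and Definition~\ref{def:shifted_P} is sound. What your route buys is an explicit closed-form expression for $\m{H}$ in terms of $\tilde{\m{A}},\tilde{\m{B}},\tilde{\m{Q}},\m{R},\tilde{\m{S}}$ (the tracking analogue of \eqref{eq:Q_reg}), from which e.g. $\m{h}_{uu}=\gamma\tilde{\m{B}}^\intercal\tilde{\m{S}}\tilde{\m{B}}+\m{R}\succ\m{0}$ and the gain of Theorem~\ref{prop:optimal_control} could be read off immediately; what the paper's route buys is that it needs only the linearity of the optimal policy and closed-loop trajectories, not the DARE solution itself, so the quadratic structure is exhibited without explicitly constructing the value function.
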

\begin{proof}
	With \eqref{eq:Q-function} and \eqref{eq:Q_opt}
	\begin{align}\label{eq:Q_in_proof}
		\begin{aligned}
			Q^*(\m{x}_k,\m{u}_k,\m{P}_k) &=c\left(\m{x}_k,\m{u}_k,\m{r}(\m{P}_k,0)\right)+\!\sum_{i=1}^{\infty}\gamma^i c\!\left(\m{x}_{k+i},\m{\pi}^*\left(\m{x}_{k+i},\m{P}_k^{(i)}\right)\!,\m{r}(\m{P}_k,i)\!\right)
		\end{aligned}
	\end{align}
	follows. With \eqref{eq:sys_linear}, \eqref{eq:optimal_control_in_thm} and \eqref{eq:P_lin} it is obvious that the states $\m{x}_{k+i}$ and controls $\m{\pi}^*(\m{x}_{k+i},\m{P}_k^{(i)})$ are linear w.r.t. $\m{z}_k$, $\forall i=0,1,\dots$. From this linear dependency and with \eqref{eq:error_e}, linearity of $\m{e}_{k,i}$ w.r.t. $\m{z}_k$, $\forall i=0,1,\dots$ results. Due to the linear dependencies of $\m{e}_{k,i}$ and $\m{\pi}^*(\cdot)$ and the quadratic structure of $c(\cdot)$ in \eqref{eq:J_quadratic}, the Q-function in \eqref{eq:Q_in_proof} is quadratic w.r.t. $\m{z}_k$, thus \eqref{eq:Q_quad} holds.
\end{proof}

As a consequence of Lemma~\ref{thm:structure}, $Q^*$ can be exactly parametrized by means of $\hat{Q}^*$ according to \eqref{eq:Q_paramet} if $\hat{\m{w}}=\m{w}$ corresponds to the non-redundant elements of $\m{H}=\m{H}^\intercal$ (doubling elements of $\hat{\m{w}}$ associated with off-diagonal elements of $\m{H}$) and $\m{\phi} = \m{z}_k\otimes \m{z}_k$, where $\otimes$ denotes the Kronecker product. Based on Lemma~\ref{thm:structure}, the optimal control law is given as follows.
\begin{theorem}{(Optimal Tracking Control Law in Terms of $\m{H}$)}\label{prop:optimal_control}
	The unique optimal extended state feedback control minimizing $J_k$ \eqref{eq:J_quadratic} is given by
	\begin{align}\label{eq:opt_control}
		\m{u}_k^*&=\m{\pi}^*(\m{x}_k,\m{P}_k)=-\m{L}\m{y}_k^{\m{P}_k}=-\m{h}_{uu}^{-1}\matc{\m{h}_{ux}&\m{h}_{up}}\matc{\m{x}_k\\ \m{p}_{k,1:n}}.
	\end{align}
\end{theorem}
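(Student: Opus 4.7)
The plan is to combine Lemma~\ref{thm:structure} with the characterization~\eqref{eq:u_aus_Q} of the optimal control as the unconstrained minimizer of $Q^*$ in $\m{u}_k$. Because Lemma~\ref{thm:structure} gives $Q^*(\m{x}_k,\m{u}_k,\m{P}_k)=\m{z}_k^\intercal\m{H}\m{z}_k$ with $\m{H}=\m{H}^\intercal$ and $\m{z}_k=[\m{x}_k^\intercal,\m{u}_k^\intercal,\m{p}_{k,1:n}^\intercal]^\intercal$, the minimization over $\m{u}_k$ becomes a finite-dimensional quadratic program whose first-order optimality condition is linear in $\m{u}_k$.

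Concretely, I would expand
\begin{align*}
Q^*(\m{x}_k,\m{u}_k,\m{P}_k)=\m{u}_k^\intercal\m{h}_{uu}\m{u}_k+2\m{u}_k^\intercal\m{h}_{ux}\m{x}_k+2\m{u}_k^\intercal\m{h}_{up}\m{p}_{k,1:n}+(\text{terms independent of }\m{u}_k),
\end{align*}
using $\m{h}_{xu}=\m{h}_{ux}^\intercal$ and $\m{h}_{pu}=\m{h}_{up}^\intercal$ coming from the symmetry of $\m{H}$. Differentiating with respect to $\m{u}_k$ and setting the result to zero gives $2\m{h}_{uu}\m{u}_k+2\m{h}_{ux}\m{x}_k+2\m{h}_{up}\m{p}_{k,1:n}=\m{0}$, which rearranges to the claimed formula $\m{u}_k^*=-\m{h}_{uu}^{-1}[\m{h}_{ux}\ \m{h}_{up}][\m{x}_k^\intercal\ \m{p}_{k,1:n}^\intercal]^\intercal$.

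The main obstacle, and the only non-routine step, is justifying that $\m{h}_{uu}$ is invertible and that the stationary point is actually the unique global minimum, i.e.\ that $\m{h}_{uu}\succ\m{0}$. I would argue this by identifying $\m{H}$ with the Riccati-based quantities in Theorem~\ref{thm:feedback}: applying Bellman's equation~\eqref{eq:Q_opt} to the extended-state LQR reformulation~\eqref{eq:cost_reformulation}--\eqref{eq:system_reformulation} yields $\m{h}_{uu}=\m{R}+\gamma\tilde{\m{B}}^\intercal\tilde{\m{S}}\tilde{\m{B}}$, which is positive definite since $\m{R}\succ\m{0}$ by Assumption~\ref{asm:LQR} and $\tilde{\m{S}}\succeq\m{0}$ by standard Riccati theory under Assumptions~\ref{asm:LQR}--\ref{asm:Tracking}. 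Hence $Q^*$ is strictly convex in $\m{u}_k$, the stationary point is the unique minimizer, and \eqref{eq:u_aus_Q} together with Lemma~\ref{lem:QoptJ} delivers the optimal tracking input.

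As a consistency check (not needed for the proof but useful to record), identifying further blocks $\m{h}_{ux}=\gamma\tilde{\m{B}}^\intercal\tilde{\m{S}}\tilde{\m{A}}_{\cdot,1:n}$ and $\m{h}_{up}=\gamma\tilde{\m{B}}^\intercal\tilde{\m{S}}\tilde{\m{A}}_{\cdot,n+1:n(p+1)}$ reproduces exactly the gain $\m{L}=(\gamma\tilde{\m{B}}^\intercal\tilde{\m{S}}\tilde{\m{B}}+\m{R})^{-1}\gamma\tilde{\m{B}}^\intercal\tilde{\m{S}}\tilde{\m{A}}$ of Theorem~\ref{thm:feedback}, so the two expressions for the optimal controller agree and uniqueness follows from the uniqueness assertion in Theorem~\ref{thm:feedback}(ii).
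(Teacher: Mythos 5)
Your proposal is correct, and the main computation (expanding the quadratic form from Lemma~\ref{thm:structure}, imposing the first-order condition in $\m{u}_k$, and recognizing that everything hinges on $\m{h}_{uu}\succ\m{0}$) is exactly the paper's route. Where you diverge is in how you certify $\m{h}_{uu}\succ\m{0}$: you identify the blocks of $\m{H}$ with the extended-state Riccati quantities of Theorem~\ref{thm:feedback}, obtaining $\m{h}_{uu}=\m{R}+\gamma\tilde{\m{B}}^\intercal\tilde{\m{S}}\tilde{\m{B}}\succ\m{0}$ from $\tilde{\m{S}}\succeq\m{0}$, whereas the paper avoids the full block identification by evaluating the tracking Q-function at $\m{P}_k=\m{0}$, noting $Q^*(\m{x}_k,\m{u}_k,\m{0})=Q^*_{\text{reg}}(\m{x}_k,\m{u}_k)$, and reading off $\m{h}_{uu}=\m{h}_{\text{reg},uu}=\gamma\m{B}^\intercal\m{S}\m{B}+\m{R}\succ\m{0}$ from the standard regulation Q-function with the $n$-dimensional DARE solution $\m{S}$ (Bradtke et al.). The two are consistent, since $\tilde{\m{B}}$ has only its $\m{B}$-block nonzero and the top-left block of $\tilde{\m{S}}$ is $\m{S}$, so $\m{R}+\gamma\tilde{\m{B}}^\intercal\tilde{\m{S}}\tilde{\m{B}}=\m{R}+\gamma\m{B}^\intercal\m{S}\m{B}$. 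The paper's restriction-to-regulation trick is lighter, needing only Assumption~\ref{asm:LQR} and no bookkeeping on the extended system; your identification costs a little more justification (it leans on the equivalence to the extended-state LQR established in Theorem~\ref{thm:feedback} and on $\tilde{\m{S}}\succeq\m{0}$ under Assumptions~\ref{asm:LQR}--\ref{asm:Tracking}), but it buys the explicit correspondence of all blocks of $\m{H}$ with $\tilde{\m{S}}$, so the gain in \eqref{eq:opt_control} is seen directly to coincide with $\m{L}$ in \eqref{eq:L_optimal} and uniqueness is inherited from Theorem~\ref{thm:feedback}(ii) --- a connection the paper leaves implicit.
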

\begin{proof}
	According to Lemma~\ref{lem:QoptJ}, the desired control $\m{u}_k^*$ minimizing $Q^*(\m{x}_k,\m{u}_k,\m{P}_k)$ is also minimizing $J_k$. With \eqref{eq:Q_quad} and $\m{H}=\m{H}^\intercal$, the necessary condition
	\begin{align}
		\frac{\partial Q^*(\m{x}_k,\m{u}_k,\m{P}_k)}{\partial \m{u}_k}=2\left(\m{h}_{ux}\m{x}_k+\m{h}_{up}\m{p}_{k,1:n}+\m{h}_{uu}\m{u}_k\right)\overset{!}{=}\m{0}
	\end{align}
	yields the control $\m{u}_k^*$ given in \eqref{eq:opt_control}. Furthermore,
	\begin{align}
		\frac{\partial^2 Q^*(\m{x}_k,\m{u}_k,\m{P}_k)}{\partial \m{u}_k^2}=2\m{h}_{uu}
	\end{align}
	demonstrates that $\m{h}_{uu}\succ \m{0}$ is required in order to ensure that the control $\m{u}_k^*$ \eqref{eq:opt_control} minimizes $J_k$ \eqref{eq:J_quadratic}. This will be shown in the following. Therefore, let $Q^*_{\text{reg}}(\m{x}_k,\m{u}_k)$ be the optimal Q-function related to the regulation case, i.e. where $\m{r}(\m{P}_k,i)=\m{r}(\m{0},i)=\m{0}$. Then, it is obvious that
	\begin{align}\label{eq:Q_0}
		Q^*(\m{x}_k,\m{u}_k,\m{0})=Q^*_{\text{reg}}(\m{x}_k,\m{u}_k),
	\end{align}
	$\forall \m{x}_k\in\mathbb{R}^n, \m{u}_k\in\mathbb{R}^m$, must be true. Furthermore, for the regulation case, it is well-known that
	\begin{align}\label{eq:Q_reg}
		\begin{aligned}
			Q^*_{\text{reg}}(\m{x}_k,\m{u}_k)&=\matc{\m{x}_k\\ \m{u}_k}^\intercal\! \matc{\m{h}_{\text{reg},xx}&\m{h}_{\text{reg},xu}\\\m{h}_{\text{reg},ux}&\m{h}_{\text{reg},uu}}\!\matc{\m{x}_k\\ \m{u}_k}=\matc{\m{x}_k\\ \m{u}_k}^\intercal\! \matc{\gamma\m{A}^\intercal\m{S}\m{A}+\m{Q}&\gamma\m{A}^\intercal\m{S}\m{B}\\\gamma\m{B}^\intercal\m{S}\m{A}&\gamma\m{B}^\intercal\m{S}\m{B}+\m{R}}\!\matc{\m{x}_k\\ \m{u}_k}
		\end{aligned}
	\end{align}
	holds (see e.g. Bradtke et al.\cite{Bradtke.1994}). Here, $\m{S}$ is the solution of the discrete-time algebraic Riccati equation 
	\begin{align}
		\m{S}=\gamma\m{A}^\intercal\m{S}\m{A}-\gamma\m{A}^\intercal\m{S}\m{B}(\m{R}+\m{B}^\intercal\m{S}\m{B})^{-1}\m{B}^\intercal\m{S}\m{A}+\m{Q}.
	\end{align}
	Under Assumption~\ref{asm:LQR}, $\m{S}=\m{S}^\intercal\succeq\m{0}$ exists and is unique (Ku{\v{c}}era, Theorem~8\cite{Kucera.1972}). Thus, from \eqref{eq:Q_0} and \eqref{eq:Q_reg},
	\begin{align}
		\m{h}_{uu}&=\m{h}_{\text{reg},uu}=\gamma\m{B}^\intercal\m{S}\m{B}+\m{R}\succ\m{0}
	\end{align}
	results. This completes the proof.
\end{proof}
Thus, if $\m{H}$ (or equivalently $\m{w}$) is known, both $Q^*$ and $\m{\pi}^*$ can be calculated.

\section{Results}\label{sec:results}
In order to validate our proposed PRADP tracking method, we show simulation results in the following, where the reference trajectory is parametrized by means of cubic splines\footnote{Other approximations can be used by choosing different basis functions $\m{\rho}(i)$ (e.g. linear interpolation with $\m{\rho}(i)=\matc{iT&1}^\intercal$ or zero-order hold with $\m{\rho}(i)=1$).}. Furthermore, we compare the results with an ADP tracking method from literature which assumes that the reference can be described by a time-invariant exo-system $\m{f}_{\text{ref}}(\m{r}_k)$. Finally, we compare our learned controller that does not know the system dynamics with the ground truth controller which is calculated based on full system knowledge.

\subsection{Cubic Polynomial Reference Parametrization}
We choose $\m{r}(\m{P}_k,i)$ to be a cubic polynomial w.r.t. $i$, i.e. $\m{\rho}(i)=\matc{(iT)^3&(iT)^2&iT&1}^\intercal$, where $T$ denotes the sampling time. The associated transformation in order to obtain the shifted version $\m{P}_k^{(i)}$ of $\m{P}_k$ according to Definition~\ref{def:shifted_P} thus results from the following:
\begin{align}\label{eq:shifted_cubic}
	\begin{aligned}
		\m{r}(\m{P}_k,i+j)&=\m{P}_{k}\m{\rho}(i+j)=\m{P}_{k}\matc{((i+j)T)^3\\((i+j)T)^2\\(i+j)T\\1}=\m{P}_{k}\underbrace{\matc{1&3iT&3(iT)^2&(iT)^3\\0&1&2iT&(iT)^2\\0&0&1&iT\\0&0&0&1}}_{\m{T}(i)}\m{\rho}(j)=\m{P}_{k}^{(i)}\m{\rho}(j).
		%=\m{P}_{k}^{(i)}\m{\rho}(j).
	\end{aligned}
\end{align}
\tdd{Formel überprüfen, entweder Unterklammerung in $\m{T}(i)$ ändern oder hintersten Teil weg}

In order to fully describe $\m{r}(\m{P}_k,i)$, the values of $\m{P}_k$ remain to be determined. Therefore, given sampling points of the reference trajectory every $d=25$ time steps, let $\m{P}_k$, $k=dj$, $j=0,1,\dots$, result from cubic spline interpolation. In between the sampling points, let $\m{P}_{k+i}=\m{P}_k^{(i)}$, ${i=1,2,\dots,d-1}$ (cf. Definition~\ref{def:shifted_P} and \eqref{eq:shifted_cubic}). This way, the controller is provided with $\m{P}_k$ at each time step $k$ when facing Problem~\ref{problem:LQ}.
\begin{note}
	The given procedure to generate parameters $\m{P}_k$ decouples the sampling time of the controller from the availability of sampling points given for the reference trajectory (in our example only every $d=25$ time steps). 
\end{note}

\subsection{Example System}
Consider a mass-spring-damper system with $m_{\text{sys}}=\SI{0.5}{\kilogram}$, $c_{\text{sys}}=\SI{0.1}{\newton\per\meter}$ and $d_{\text{sys}}=\SI{0.1}{\kilogram\per\second}$. Discretization of this system using Tustin approximation with $T=\SI{0.1}{\second}$ yields
\begin{align}
	\m{x}_{k+1}&=
	\begingroup
	\setlength\arraycolsep{4pt}
	\mat{0.9990 & 0.0990\\-0.0198 & 0.9792}
	\endgroup
	\m{x}_k+\mat{0.0099\\0.1979}u_k.
\end{align}
Here, $x_1$ corresponds to the position, $x_2$ to the velocity of the mass $m_{\text{sys}}$ and the control $u_k$ corresponds to a force.

We desire to track the position (i.e. $x_1$), thus we set
\begin{align}
	\m{Q}&=\begingroup
	\setlength\arraycolsep{4pt}\mat{100&0\\0&0}\endgroup \text{ and } \m{R} = 1
\end{align}
in order to strongly penalize the deviation of the first state from the parametrized reference (cf. \eqref{eq:J_quadratic}) and $\gamma = 0.9$. In this example setting, Assumptions~\ref{asm:LQR}--\ref{asm:Tracking} hold.

\subsection{Simulations}
In order to investigate the benefits of our proposed PRADP tracking controller, we compare our method with an ADP tracking controller from literature,\cite{Luo.2016,Kiumarsi.2014} which assumes that the reference trajectory is generated by a time-invariant exo-system $\m{f}_{\text{ref}}(\m{r}_k)$.
Both our method (with $e_{\hat{\m{w}}}=\SI{1E-5}{}$ in Algorithm~\ref{alg:LSPI}) and the comparison method from literature are trained on data of $500$ time steps, where Gaussian noise with zero mean and standard deviation of $1$ is applied to the system input $u_k$ for excitation. Note that none of the methods requires the system dynamics \eqref{eq:sys_linear}. Let $\m{r}_0 = \mat{0&1}^\intercal$. The reference trajectory during training is
\begin{align}\label{eq:trainin_ref}
	\begin{aligned}
		\mat{r_{k+1,1}\\r_{k+1,2}}&=\m{r}_{k+1}=\m{f}_{\text{ref}}(\m{r}_k)=\underbrace{\mat{0.9988&0.0500\\-0.0500&0.9988}}_{\m{F}_{\text{ref}}}\m{r}_k\\
	\end{aligned}
\end{align}
for the comparison method and the associated spline for our method.

The learned controllers both of our method and the comparison algorithm are tested on a reference trajectory for $x_1$ that equals the sine described by $r_{k,1}$ according to \eqref{eq:trainin_ref} for the first $250$ time steps. Then, the reference trajectory deviates from this sine as is depicted in Fig.~\ref{fig:plot_x} in gray. Here, the blue crosses mark the sampling points for spline interpolation, the black dashed line depicts $x_1$ resulting from our proposed method and the red dash-dotted line shows $x_1$ for the comparison method.
\begin{figure}[b!]
	\begin{center}	
		%\tikzsetnextfilename{parallel}
		\includegraphics[width=0.8*\columnwidth, height=10cm]{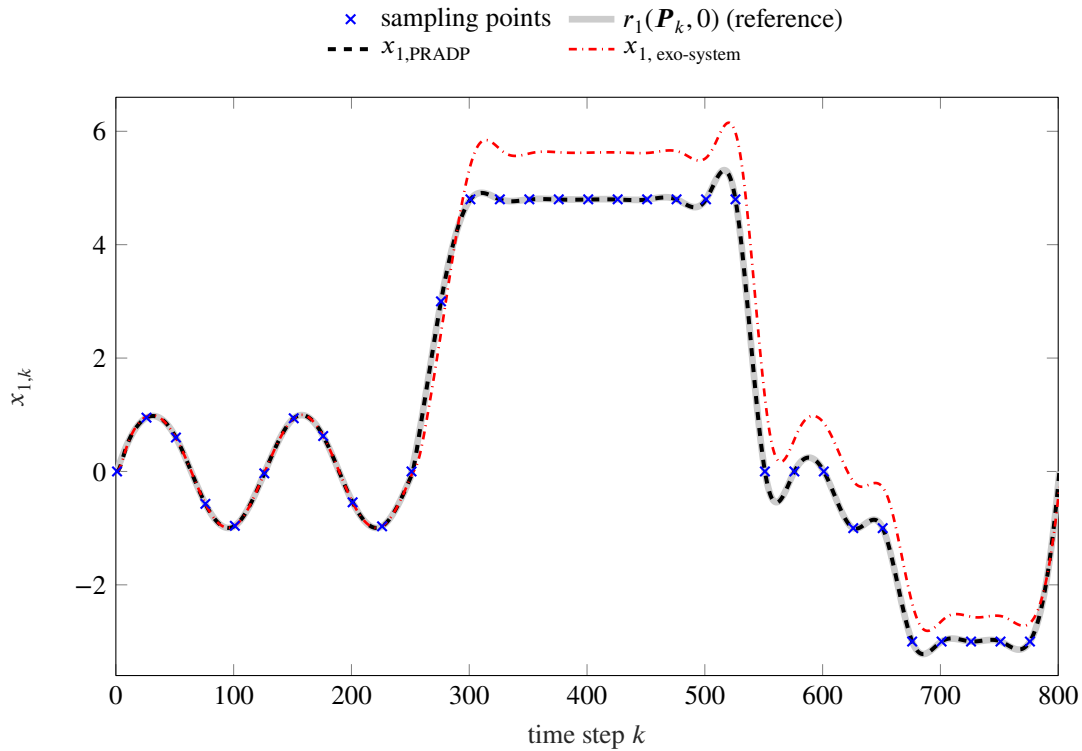}%new_plot_sys1_states.tex}
		\caption{Tracking results of our proposed method compared with a state of the art ADP tracking controller.}\label{fig:plot_x}	
	\end{center}
\end{figure}
Furthermore, to gain insight into the tracking quality by means of the resulting cost, $c\!\left(\m{x}_k,\m{u}_k,\m{r}(\m{P}_k,0)\right)$ is depicted in Fig.~\ref{fig:compare_cost} for both methods. Note the logarithmic ordinate which is chosen in order to render the black line representing the cost associated with our method visible.

The optimal controller $\m{L}$ calculated using full system information (see Theorem~\ref{thm:feedback} and Note~\ref{note:DARE}) results in 
\begin{align}
	\m{L}=\matc{6.30&2.26&-0.31&-0.97&-2.37&-6.40&0&0&0&0}\!.
\end{align}
Comparing the learned controller $\m{L}_{\text{PRADP}}$ with the ground truth solution $\m{L}$ yields $\norm{\m{L}_{\text{PRADP}}-\m{L}}=\SI{1.08e-13}{}$. Thus, the learned controller is almost identical to the ground truth solution which demonstrates that the optimal tracking controller has successfully been learned using PRADP without knowledge of the system dynamics.

\subsection{Discussion}
As can be seen from Fig.~\ref{fig:plot_x}, our proposed method successfully tracks the parametrized reference trajectory. In contrast, the method proposed by e.g. Luo et al.\cite{Luo.2016} and Kiumarsi et al.\cite{Kiumarsi.2014} causes major deviation from the desired trajectory as soon as the reference does not follow the same exo-system which it was trained on (i.e. as soon as \eqref{eq:trainin_ref} does not hold anymore after $250$ time steps).

In addition, the cost in Fig.~\ref{fig:compare_cost} reveals that both methods yield small and similar costs as long as the reference trajectory follows $\m{F}_{\text{ref}}$. However, as soon as the reference trajectory deviates from the time-invariant exo-system description $\m{F}_{\text{ref}}$ at $k>250$, the cost of the comparison method drastically exceeds the cost associated with our proposed method. With
%\begin{align}
%	\max_k{c_{\text{exo-system}}\left(\m{x}_k,\m{u}_k,\m{r}(\m{P}_k,0)\right)}&\approx 270\nonumber\\
%	\intertext{and}
%	\max_k{c_{\text{PRADP}}\left(\m{x}_k,\m{u}_k,\m{r}(\m{P}_k,0)\right)}&\approx 2.8\nonumber,
%\end{align}
\begin{equation}
	\max_k{c_{\text{exo-system}}\left(\m{x}_k,\m{u}_k,\m{r}(\m{P}_k,0)\right)}\approx 270 \quad \text{ and } \quad
	\max_k{c_{\text{PRADP}}\left(\m{x}_k,\m{u}_k,\m{r}(\m{P}_k,0)\right)}\approx 2.8\nonumber,
\end{equation}
our method clearly outperforms the comparison method. PRADP does not require the assumption that the reference trajectory follows time-invariant exo-system dynamics but is nevertheless able to follow this kind of reference (see $k\leq 250$ in the simulations) as well as all other references that can be approximated by means of the time-varying parameter $\m{P}_k$. Thus, PRADP can be interpreted as a more generalized tracking approach compared to existing ADP tracking methods.
\begin{figure}[tb!]
	\begin{center}	
		%\tikzsetnextfilename{parallel}
		\includegraphics[width=0.8*\columnwidth, height=10cm]{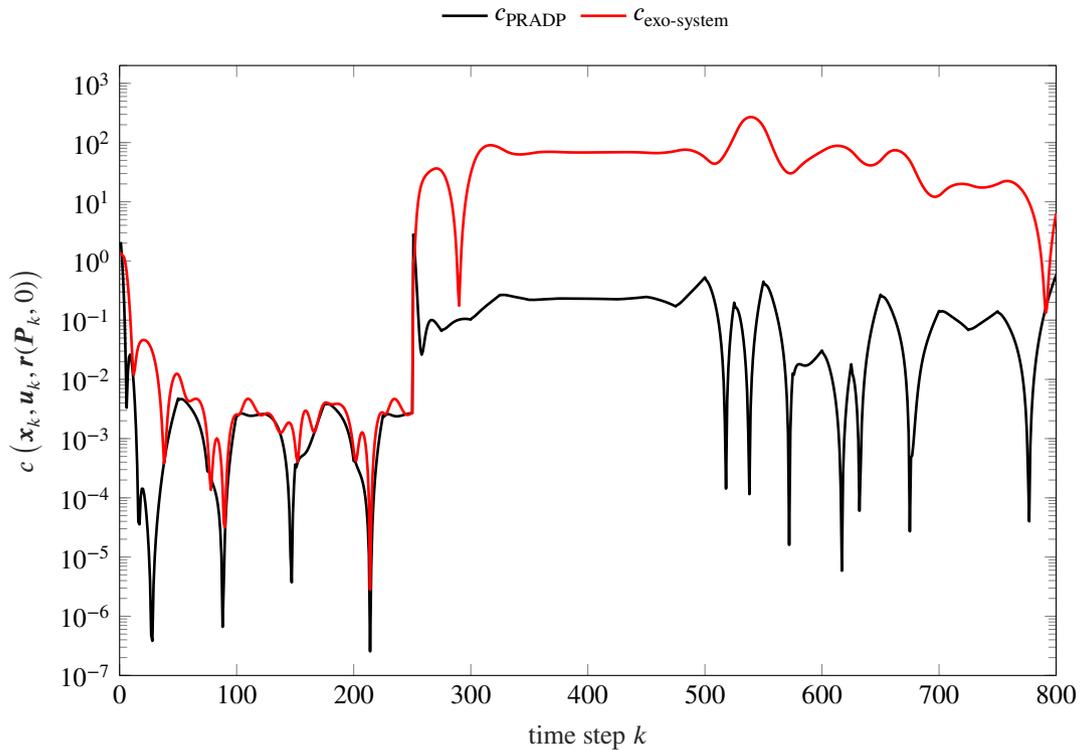}%new_plot_sys1_states.tex}
		\caption{One-step cost $c\left(\m{x}_k,\m{u}_k,\m{r}(\m{P}_k,0)\right)$ both for our proposed method and the comparison method. Note the logarithmic ordinate.}\label{fig:compare_cost}	
	\end{center}
\end{figure}
\section{Conclusion}\label{sec:conclusion}
In this paper, we proposed a new ADP-based tracking controller termed \textit{Parametrized Reference Adaptive Dynamic Programming} (PRADP). This method implicitly incorporates the approximated reference trajectory information into the Q-function that is learned. This allows the controller to track time-varying parametrized references once the controller has been trained and does not require further adaptation or re-training compared to previous methods.
Simulation results showed that our learned controller is more flexible compared to state-of-the-art ADP tracking controllers which assume that the reference to track follows a time-invariant exo-system. Motivated by a straightforward choice of basis functions, we concentrated on the LQ tracking case in our simulations where the optimal controller has successfully been learned. However, as the mechanism of PRADP allows more general tracking problem formulations (see Section~\ref{sec:our_method}), general function approximators can be used in order to approximate $Q$ and allow for nonlinear ADP tracking controllers in the future.

%\backmatter

%\section*{Acknowledgments}
%This is acknowledgment text.\cite{Kenamond2013} Provide text here. This is acknowledgment text. Provide text here. This is acknowledgment text. Provide text here. This is acknowledgment text. Provide text here. This is acknowledgment text. Provide text here. This is acknowledgment text. Provide text here. This is acknowledgment text. Provide text here. This is acknowledgment text. Provide text here. This is acknowledgment text. Provide text here. 

%\subsection*{Author contributions}
%
%This is an author contribution text. This is an author contribution text. This is an author contribution text. This is an author contribution text. This is an author contribution text. 

%\subsection*{Financial disclosure}
%
%None reported.
%
%\subsection*{Conflict of interest}
%
%The authors declare no potential conflict of interests.

%\section*{Supporting information}

%\appendix
%
%\section{Section title of first appendix\label{app1}}
%
%
%\subsection{Subsection title of first appendix\label{app1.1a}}
%
%\section{Section title of second appendix\label{app2}}%
%
%
%\subsection{Subsection title of second appendix\label{app2.1a}}
%
%
%\subsubsection{Subsection title of second appendix\label{app2.1.1a}}
%
%
%\section{Example of another appendix section\label{app3}}%

%\nocite{*}% Show all bib entries - both cited and uncited; comment this line to view only cited bib entries;
\bibliography{mybib} %

\clearpage

%\section*{Author Biography}
%
%\begin{biography}{\includegraphics[width=66pt,height=86pt,draft]{empty}}{\textbf{Author Name.} This is sample author biography text this is sample author biography text this is sample author biography text this is sample author biography text this is sample author biography text this is sample author biography text this is sample author biography text this is sample author biography text this is sample author biography text this is sample author biography text this is sample author biography text this is sample author biography text this is sample author biography text this is sample author biography text this is sample author biography text this is sample author biography text this is sample author biography text this is sample author biography text this is sample author biography text this is sample author biography text this is sample author biography text.}
%\end{biography}

\end{document}